\newtcolorbox{tightbox}[1]{
top=1mm, bottom=1mm, left=1mm, right=1mm, left skip=0cm, right skip=0cm, grow to left by=0.3cm,grow to right by=0.3cm, title={\textbf{#1}}
}
\setlist{noitemsep,topsep=0pt,parsep=0pt,partopsep=0pt,leftmargin=0.4cm}
\begin{document}

\title{
    Paxos vs Raft: Have we reached consensus on distributed consensus?
}

\author{Heidi Howard}
\affiliation{%
  \institution{University of Cambridge}
  \city{Cambridge}
  \country{UK}
}
\email{first.last@cl.cam.ac.uk}

\author{Richard Mortier}
\affiliation{%
  \institution{University of Cambridge}
  \city{Cambridge}
  \country{UK}
}
\email{first.last@cl.cam.ac.uk}

\begin{abstract}
Distributed consensus is a fundamental primitive for constructing fault-tolerant, strongly-consistent distributed systems.
Though many distributed consensus algorithms have been proposed, just two dominate production systems: Paxos, the traditional,  famously subtle, algorithm; and Raft, a more recent algorithm positioned as a more understandable alternative to Paxos.

In this paper, we consider the question of which algorithm, Paxos or Raft, is the better solution to distributed consensus?
We analyse both to determine exactly how they differ  by describing a simplified Paxos algorithm using Raft's terminology and pragmatic abstractions.

We find that both Paxos and Raft take a very similar approach to distributed consensus, differing only in their approach to leader election.
Most notably, Raft only allows servers with up-to-date logs to become leaders, whereas Paxos allows any server to be leader provided it then updates its log to ensure it is up-to-date.
Raft's approach is surprisingly efficient given its simplicity as, unlike Paxos, it does not require log entries to be exchanged during leader election.
We surmise that much of the understandability of Raft comes from the paper's clear presentation rather than being  fundamental to the underlying algorithm being presented.
\end{abstract}

\begin{CCSXML}
<ccs2012>
  <concept>
      <concept_id>10010520.10010575.10010577</concept_id>
      <concept_desc>Computer systems organization~Reliability</concept_desc>
      <concept_significance>500</concept_significance>
      </concept>
  <concept>
      <concept_id>10011007.10010940.10010971.10011120.10003100</concept_id>
      <concept_desc>Software and its engineering~Cloud computing</concept_desc>
      <concept_significance>500</concept_significance>
      </concept>
  <concept>
      <concept_id>10003752.10003809.10010172</concept_id>
      <concept_desc>Theory of computation~Distributed algorithms</concept_desc>
      <concept_significance>500</concept_significance>
      </concept>
 </ccs2012>
\end{CCSXML}

\ccsdesc[500]{Computer systems organization~Reliability}
\ccsdesc[500]{Software and its engineering~Cloud computing}
\ccsdesc[500]{Theory of computation~Distributed algorithms}

\keywords{State machine replication, Distributed consensus, Paxos, Raft}

\maketitle

\section{Introduction}
\label{s:intro}

State machine replication~\cite{Schneider:CS90} is widely used to compose a set of unreliable hosts into a single reliable service that can provide strong consistency guarantees including linearizability~\cite{Herlihy:PLS90}.
As a result, programmers can treat a service implemented using replicated state machines as a single system, making it easy to reason about expected behaviour.
State machine replication requires that each state machine receives the same operations in the same order, which can be achieved by distributed consensus.

The Paxos algorithm~\cite{Lamport:TCS98} is synonymous with distributed consensus.
Despite its success, Paxos is famously difficult to understand, making it hard to reason about, implement correctly, and safely optimise.
This is evident in the numerous attempts to explain the algorithm in simpler terms~\cite{Lampson:IWDA96,Prisco:IWDA97,Lampson:PODC01,Lamport:SIGACT01,Boichat:SIGACT03,Meling:ICPDS13,Renesse:CS15}, and was the motivation behind  Raft~\cite{Ongaro:ATC14}.

Raft's authors' claim that Raft is as efficient as Paxos whilst being more understandable and thus provides a better foundation for building practical systems.
Raft seeks to achieve this in three distinct ways:
\begin{description}
    \item[Presentation] Firstly, the Raft paper introduces a new abstraction for describing leader-based consensus in the context of state machine replication.
    This pragmatic presentation has proven incredibly popular with engineers.
    \item[Simplicity] Secondly, the Raft paper prioritises simplicity over performance.
    For example, Raft decides log entries in-order whereas Paxos typically allows out-of-order decisions but requires an extra protocol for filling the log gaps which can occur as a result.
    \item[Underlying algorithm] Finally, the Raft algorithm takes a novel approach to leader election which alters how a leader is elected and thus how safety is guaranteed.
\end{description}
Raft rapidly became popular~\cite{Raft} and production systems today are divided between those which use Paxos~\cite{Burrows:OSDI06,Baker:CIDR11,Schwarzkopf:Eurosys13,Verma:Eurosys15,Ramakrishnan:ICMD17,Zheng:VLDB17} and those which use Raft~\cite{Atomix,Consul,Etcd,Kubernetes,Trillian,M3,Cockroach}.

To answer the question of  which, Paxos or Raft, is the better solution to distributed consensus, we must first answer the question of how exactly the two algorithms differ in their approach to consensus?
Not only will this help in evaluating these algorithms, it may also allow Raft to benefit from the decades of research optimising Paxos' performance~\cite{Gafni:ICDC:00,Lamport:DSN04,Lamport:MStechreport05,Lamport:MStechreport06,Camargos:PODC07,Moraru:SOSP13,Kraska:Eurosys13,Moraru:CC14} and vice versa~\cite{Zhang:APNet17,Arora:HotCloud17}.

However, answering this question is not a straightforward matter.
Paxos is often regarded not as a single algorithm but as a family of algorithms for solving distributed consensus.
Paxos' generality (or underspecification, depending on your point of view) means that descriptions of the algorithm vary, sometimes considerably, from paper to paper.

To overcome this problem, we present here a simplified version of Paxos that results from surveying the various published descriptions of Paxos.
This algorithm, which we refer to simply as Paxos, corresponds more closely to how Paxos is used today than to how it was first described~\cite{Lamport:TCS98}.
It has been referred to elsewhere  as \emph{multi-decree} Paxos, or just \emph{MultiPaxos}, to distinguish it from \emph{single-decree} Paxos, which decides a single value instead of a totally-ordered sequence of values.
We also describe our simplified algorithm using the style and abstractions from the Raft paper,  allowing a  fair comparison between the two different algorithms.

We conclude that there is no significant difference in understandability between the algorithms, and that Raft's leader election is surprisingly efficient given its simplicity.

\section{Background}
\label{s:back}

This paper examines distributed consensus in the context of state machine replication.
State machine replication requires that an application's deterministic state machine is replicated across $n$ servers with each applying the same set of operations in the same order.
This is achieved using a replication log, managed by a distributed consensus algorithm, typically Paxos or Raft.

We assume that the system is \emph{non-Byzantine}~\cite{Lamport:PLS82} but we do not assume that the system is synchronous.
Messages may be arbitrarily delayed and participating servers may operate at any speed, but we assume message exchange is reliable and in-order (e.g.,~through use of TCP/IP).
We do not depend upon clock synchronisation for safety, though we must for liveness~\cite{Fischer:JACM85}.
We assume each of the $n$ servers has an unique id $s$ where $s \in \{0 .. (n-1)\}$.
We assume that operations are unique, easily achieved by adding a pair of sequence number and server id to each operation.

\section{Approach of Paxos \& Raft}

Many consensus algorithms, including Paxos and Raft, use a leader-based approach to solve distributed consensus.
At a high-level, these algorithms operate as follows:

One of the $n$ servers is designated the \emph{leader}.
All operations for the state machine are sent to the leader.
The leader appends the operation to their log and asks the other servers to do the same.
Once the leader has received acknowledgements from a majority of servers that this has taken place, it applies the operation to its state machine.
This process repeats until the leader fails.
When the leader fails, another server  takes over as leader.
This process of electing a new leader involves at least a majority of servers, ensuring that the new leader will not overwrite any previously applied operations.

We now examine Paxos and Raft in more detail.
Readers may find it helpful to refer to the summaries of Paxos and Raft provided in Appendices~\ref{appendix:paxos} \&~\ref{appendix:raft}.
We focus here on the core elements of Paxos and Raft and, due to space constraints, do not compare garbage collection, log compaction, read operations or reconfiguration algorithms.

\subsection{Basics}

\begin{figure}
    \centering
    \begin{tikzpicture}[>=stealth,shorten >=1pt,auto,node distance=3.7cm, semithick, 
state/.style={fill=gray!10,draw=black,circle,font=\small},
transition/.style={->,align=center,font=\small}] 

\node[state] (Follower) {Follower};
\node[state] (Candidate) [right of=Follower]{Candidate};
\node[state] (Leader) [right of= Candidate] {Leader};

\coordinate (start) at (-0.75,2.5);

\draw(start) edge[transition] node {starts up/\\recovers} (Follower);
\draw(Follower) edge[bend left,transition] node {times out,\\starts election} (Candidate);
\draw(Candidate) edge[loop above,transition,Cyan] node {times out,\\new election} (Candidate);
\draw(Candidate) edge[bend left,transition] node {receives votes from\\ majority of servers} (Leader);
\draw(Candidate) edge[transition,above] node {discovers new} (Follower);
\draw(Candidate) edge[transition] node {term \textcolor{Cyan}{(or leader)}} (Follower);
\draw(Leader) edge[bend left,transition] node {discovers new term} (Follower);

\end{tikzpicture}
    \caption{\label{fig:states}State transitions between the server states for Paxos \& Raft. The transitions in blue are specific to Raft.}
\end{figure}
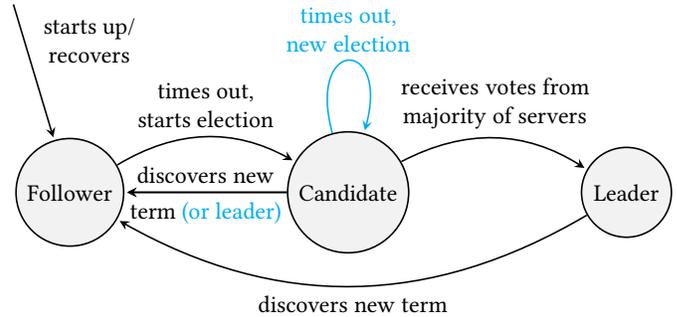

\begin{figure*}
    \centering
    \begin{subfigure}{.23\textwidth}
        \centering
        \begin{tikzpicture}[shorten >=1pt,auto, ultra thick,tx/.style={semithick,rectangle,draw=black,align=center,minimum size=0.8cm,font=\Small,inner sep=0cm}]

\def\ygap{1.1};
\def\width{0.8};

\def \entries {{1/A,2/B}, 
			   {1/A,2/B,2/C},
			   {1/A,3/B,3/D,3/E}};

\def \maxlogindex {4};

\def\commitedEntries{1/1,2/2,3/1};

\foreach \lst [count=\y from 1] in \entries {
 \node (\y) at (-0.5,({-\y+1)}*\ygap) [left] {\Small$s_\y$};
	\foreach \term/\message [count=\x from 0] in \lst {
		\node[tx] at (\x*\width,({-\y+1)}*\ygap) {\term \\\message};}}

\foreach \replica/\index in \commitedEntries{
    \draw[-] (({\index-0.5)}*\width,({-\replica+1.5)}*\ygap) -- (({\index-0.5)}*\width,({-\replica+0.5)}*\ygap);
}

\foreach \i [count=\x from 0] in {1,...,\maxlogindex} {
	\node at (\x*\width,1) {\Small\i};
}

\end{tikzpicture}
        \caption{\label{fig:log/before}Initial state of logs
          before electing a new leader.}
    \end{subfigure}
    \quad
    \begin{subfigure}{.23\textwidth}
        \centering
        \begin{tikzpicture}[shorten >=1pt,auto, ultra thick,tx/.style={semithick,rectangle,draw=black,align=center,minimum size=0.8cm,font=\Small,inner sep=0cm}]

\def\ygap{1.1};
\def\width{0.8};

\def \entries {{1/A,\textcolor{Red}{4}/B,\textcolor{Red}{4}/\textcolor{Red}{C}}, 
			   {1/A,2/B,2/C},
			   {1/A,3/B,3/D,3/E}};

\def \maxlogindex {4};

\def\commitedEntries{1/1,2/2,3/1};

\foreach \lst [count=\y from 1] in \entries {
 \node (\y) at (-0.5,({-\y+1)}*\ygap) [left] {\Small$s_\y$};
	\foreach \term/\message [count=\x from 0] in \lst {
		\node[tx] at (\x*\width,({-\y+1)}*\ygap) {\term \\\message};}}

\foreach \replica/\index in \commitedEntries{
    \draw[-] (({\index-0.5)}*\width,({-\replica+1.5)}*\ygap) -- (({\index-0.5)}*\width,({-\replica+0.5)}*\ygap);
}

\foreach \i [count=\x from 0] in {1,...,\maxlogindex} {
	\node at (\x*\width,1) {\Small\i};
}

\end{tikzpicture}
        \caption{\label{fig:log/after-s1}$s_1$ is leader in term $4$
          after a vote from $s_2$.}
    \end{subfigure}
    \quad
    \begin{subfigure}{.23\textwidth}
        \centering
        \begin{tikzpicture}[shorten >=1pt,auto, ultra thick,tx/.style={semithick,rectangle,draw=black,align=center,minimum size=0.8cm,font=\Small,inner sep=0cm}]

\def\ygap{1.1};
\def\width{0.8};

\def \entries {{1/A,2/B}, 
			   {1/A,2/B,\textcolor{Red}{5}/\textcolor{Red}{D},\textcolor{Red}{5}/\textcolor{Red}{E}},
			   {1/A,3/B,3/D,3/E}};

\def \maxlogindex {4};

\def\commitedEntries{1/1,2/2,3/1};

\foreach \lst [count=\y from 1] in \entries {
 \node (\y) at (-0.5,({-\y+1)}*\ygap) [left] {\Small$s_\y$};
	\foreach \term/\message [count=\x from 0] in \lst {
		\node[tx] at (\x*\width,({-\y+1)}*\ygap) {\term \\\message};}}

\foreach \replica/\index in \commitedEntries{
    \draw[-] (({\index-0.5)}*\width,({-\replica+1.5)}*\ygap) -- (({\index-0.5)}*\width,({-\replica+0.5)}*\ygap);
}

\foreach \i [count=\x from 0] in {1,...,\maxlogindex} {
	\node at (\x*\width,1) {\Small\i};
}

\end{tikzpicture}
        \caption{\label{fig:log/after-s2}$s_2$ is leader in term $5$
          after a vote from $s_3$.}
    \end{subfigure}
    \quad
    \begin{subfigure}{.23\textwidth}
        \centering
        \begin{tikzpicture}[shorten >=1pt,auto, ultra thick,tx/.style={semithick,rectangle,draw=black,align=center,minimum size=0.8cm,font=\Small,inner sep=0cm}]

\def\ygap{1.1};
\def\width{0.8};

\def \entries {{1/A,2/B}, 
			   {1/A,2/B,2/C},
			   {1/A,\textcolor{Red}{6}/B,\textcolor{Red}{6}/D,\textcolor{Red}{6}/E}};

\def \maxlogindex {4};

\def\commitedEntries{1/1,2/2,3/1};

\foreach \lst [count=\y from 1] in \entries {
 \node (\y) at (-0.5,({-\y+1)}*\ygap) [left] {\Small$s_\y$};
	\foreach \term/\message [count=\x from 0] in \lst {
		\node[tx] at (\x*\width,({-\y+1)}*\ygap) {\term \\\message};}}

\foreach \replica/\index in \commitedEntries{
    \draw[-] (({\index-0.5)}*\width,({-\replica+1.5)}*\ygap) -- (({\index-0.5)}*\width,({-\replica+0.5)}*\ygap);
}

\foreach \i [count=\x from 0] in {1,...,\maxlogindex} {
	\node at (\x*\width,1) {\Small\i};
}

\end{tikzpicture}
        \caption{\label{fig:log/after-s3}$s_3$ is leader in term $6$
          after a vote from $s_1$ or $s_2$.}
    \end{subfigure}
    \caption{Logs of three servers running Paxos. Figure~(\protect\subref{fig:log/before}) shows the logs when a leader election was triggered. Figures~(\protect\subref{fig:log/after-s1}---\protect\subref{fig:log/after-s3}) show the logs after a leader has been elected but before it has sent its first AppendEntries RPC. The black line shows the commit index and red text highlights the log changes.}
    \label{fig:logs}
\end{figure*}
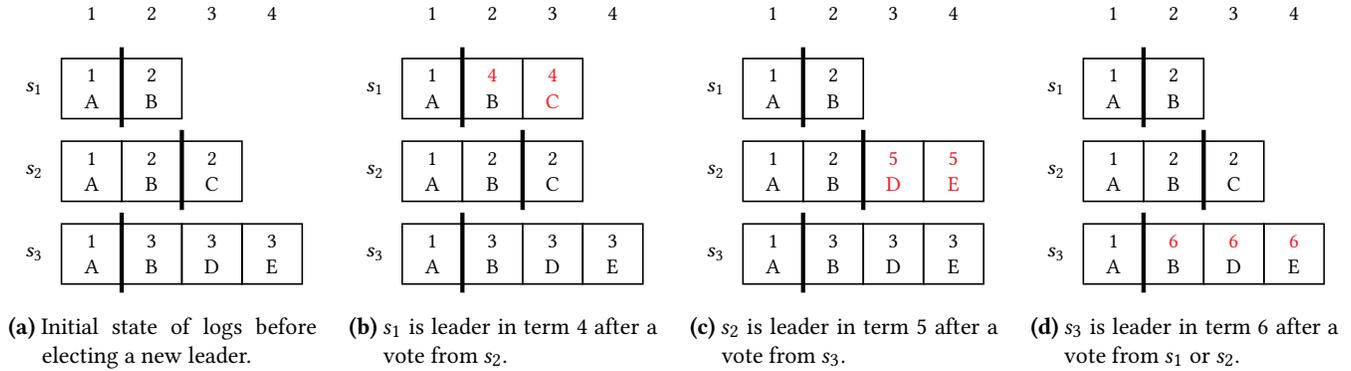

As shown in Figure~\ref{fig:states}, at any time a server can be in one of three states:
\begin{description}
    \item[Follower] A passive state where it is responsible only for replying to RPCs.
    \item[Candidate] An active state where it is trying to become a leader using the \emph{RequestVotes} RPC.
    \item[Leader] An active state where it is responsible for adding operations to the replicated log using the AppendEntries RPC.
\end{description}

Initially, servers are in the \emph{follower} state.
Each server continues as a follower until it believes that the leader has failed.
The follower then becomes a \emph{candidate} and tries to be elected leader using RequestVote RPCs.
If successful, the candidate becomes a leader.
The new leader must regularly send AppendEntries RPCs as keepalives to prevent followers from timing out and becoming candidates.

Each server stores a natural number,  the \emph{term}, which increases monotonically over time.
Initially, each server has a current term of zero.
The sending server's (hereafter, the sender) current term is included in each RPC.
When a server receives an RPC, it (hereafter, the server) first checks the included term.
If the sender's term is greater than the server's, then the server will update its term before responding to the RPC and,
if the server was either a candidate or a leader,  step down to become a follower.
If the sender's term is equal to that of the  server, then the server will respond to the RPC as usual.
If the sender's term is less than that of the server, then the server will respond negatively to the sender, including its term in the response.
When the sender receives such a response, it will step down to follower and update its term.

\subsection{Normal operation}

When a leader receives an operation, it appends it to the end of its log with the current term.
The pair of operation and term are known as a \emph{log entry}.
The leader then sends AppendEntries RPCs to all other servers with the new log entry.
Each server maintains a commit index to record which log entries are safe to apply to its state machine, and responds to the leader acknowledging successful receipt of the new log entry.
Once the leader receives positive responses from a majority of servers, the leader updates its commit index and applies the operation to its state machine.
The leader then includes the updated commit index in subsequent AppendEntries RPCs.

A follower will only append a log entry (or set of log entries) if its log prior to that entry (or entries) is identical to the leader's log.
This ensures that log entries are added in-order, preventing gaps in the log,  and ensuring followers  apply the correct log entries to their state machines.

\subsection{Handling leader failures}

This process continues until the leader fails, requiring a new leader to be established.
Paxos and Raft take different approaches to this process so we describe each in turn.

{\bf Paxos}.
A follower will timeout after failing to receive a recent AppendEntries RPC from the leader.
It then becomes a candidate and updates its term to the next term such that $t\;\mathrm{mod}\; n = s$ where $t$ is the next term, $n$ is the number of servers and $s$ is the candidate's server id.
The candidate will send RequestVote RPCs to the other servers.
This RPC includes the candidate's new term and commit index.
When a server receives the RequestVote RPC, it will respond positively provided the candidate's term is greater than its own.
This response also includes any log entries that the server has in its log subsequent to the candidate's commit index.

Once the candidate has received positive RequestVote responses from a majority of servers, the candidate must ensure its log includes all committed entries before becoming a leader.
It does so as follows.
For each index after the commit index, the leader reviews the log entries it has received alongside its own log.
If the candidate has seen a log entry for the index then it will update its own log with the entry and the new term.
If the leader has seen multiple log entries for the same index then it will update its own log with the entry from the greatest term and the new term.
An example of this is given in Figure~\ref{fig:logs}.
The candidate can now become a leader and begin replicating its log to the other servers.

{\bf Raft}.
At least one of the followers will timeout after not receiving a recent AppendEntries RPC for the leader.
It will become a candidate and increment its term.
The candidate will send RequestVote RPCs to the other servers.
Each includes the candidate's term as well as the candidate's last log term and index.
When a server receives the RequestVote request it will respond positively provided the candidate's term is greater than or equal to its own, it has not yet voted for a candidate in this term, and the candidate's log is at least as up-to-date as its own.
This last criterion can be checked by ensuring that the candidate's last log term is greater than the server's or, if they are the same, that the candidate's last index is greater than the server's.

Once the candidate has received positive RequestVote responses from a majority of servers, the candidate can become a leader and start replicating its log.
However, for safety Raft requires that the leader does not update its commit index until at least one log entry from the new term has been committed.

As there may be multiple candidates in a given term, votes may be split such that no candidate has a majority.
In this case, the candidate times out and starts a new election with the next term.

\subsection{Safety}

Both algorithms guarantee the following property:
\begin{theorem}[State Machine Safety]
If a server has applied a log entry at a given index to its state machine, no other server will ever apply a different log entry for the same index.
\end{theorem}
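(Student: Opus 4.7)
The plan is to reduce State Machine Safety to a stronger \emph{Leader Completeness} property: if a log entry $e$ at index $i$ is committed in some term $t$, then every server that becomes leader in a term $t' \geq t$ has $e$ at index $i$ in its log at the moment of becoming leader. Since a server only applies entries it has in its log up to its commit index, and commit indices are only advanced on the basis of a leader's log, Leader Completeness implies that two servers cannot apply different entries at the same index. So the work is to prove Leader Completeness.

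I would prove Leader Completeness by strong induction on the term $t'$ of the would-be leader, with the base case $t' = t$ being the leader that committed $e$ (which holds by the normal operation rule that a leader appends before asking others to). For the inductive step, fix a new leader $L'$ in term $t' > t$ and assume the property for all terms in $[t, t')$. Because $e$ is committed at index $i$, some majority $M$ of servers had $e$ at index $i$ when $e$ was committed. Because $L'$ became leader in term $t'$, it received positive RequestVote responses from some majority $M'$. By quorum intersection, $M \cap M' \neq \emptyset$, so there is a voter $s$ that both stored $e$ at index $i$ and voted for $L'$. The remainder of the argument then splits on the algorithm: for Paxos, $s$'s RequestVote response includes its log above the candidate's commit index, so $L'$ sees $e$ at index $i$ (with term $t$), and by the inductive hypothesis any other entry at index $i$ reported by other voters (from an intermediate term $t \leq t'' < t'$) must also be $e$, so $L'$'s max-term selection installs $e$ at index $i$. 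For Raft, $s$'s vote implies $L'$'s log is at least as up-to-date as $s$'s; combined with the inductive hypothesis and the rule that a Raft leader only advances its commit index once an entry from its own term is committed, one shows that $L'$'s log must already contain $e$ at index $i$.

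The main obstacle will be the Raft case, because ``at least as up-to-date'' is phrased in terms of the last log entry, not in terms of a specific committed index. Bridging this gap requires the auxiliary observation that $s$'s last log term is $\geq t$ (since $s$ contains $e$), so $L'$'s last log term is also $\geq t$, which in turn means $L'$ was preceded (in its own log) by some leader of a term in $[t, t')$, to which the inductive hypothesis applies. The delicate point is that committing in Raft only occurs after an entry of the current term is replicated to a majority; without this restriction, one can construct scenarios (the well-known ``Figure 8'' scenario from the Raft paper) where an older entry is overwritten despite being replicated on a majority, so the proof genuinely depends on that extra rule. For Paxos the analogous subtlety is avoided because the newly elected leader explicitly re-commits the recovered entries in its own term before any further action, which forces agreement on index $i$ to propagate inductively through intermediate terms.
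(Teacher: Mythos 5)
Your proposal matches the paper's own argument essentially step for step: the reduction of State Machine Safety to Leader Completeness, the induction over terms above $t$, the quorum-intersection argument identifying a voter that both stored the committed entry and voted for the new leader, and the max-term selection rule for Paxos; the paper only sketches the Paxos case and remarks that Raft follows the same induction, whereas you additionally spell out the Raft details (including why the current-term commit restriction is needed). No substantive differences.
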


There is at most one leader per term and a leader will not overwrite its own log so we can prove this by proving the following:

\begin{theorem}[Leader Completeness]
If an operation $op$ is committed at index $i$ by a leader in term $t$ then all leaders of terms $>t$ will also have operation $op$ at index $i$.
\end{theorem}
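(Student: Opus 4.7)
The plan is to prove Leader Completeness by strong induction on $t'$, the term of the later leader, with inductive hypothesis that every leader of a term $t''$ with $t < t'' < t'$ has $op$ at index $i$. The same induction handles both algorithms, but the inductive step has to be verified separately for Paxos's and Raft's leader election procedures.

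The common backbone is a quorum-intersection argument. Because $op$ was committed at index $i$ in term $t$, there is a set $C$ of more than $n/2$ servers that each acknowledged an AppendEntries from the leader of $t$ placing $op$ at index $i$ in their log. Any candidate that ascends to the leadership of term $t'$ must have collected RequestVote responses from some majority $V$, so $C \cap V$ is non-empty; fix $s$ in this intersection. I then need to argue that at the moment $s$ casts its vote, its log still carries $op$ at index $i$: the only way $s$'s entry at $i$ could have changed between its acknowledgement in term $t$ and its vote in term $t'$ is through an AppendEntries from a leader of some term in $(t,t')$, and by the inductive hypothesis every such leader also had $op$ at index $i$, so the overwrite is a no-op.

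For Paxos, I would then trace the leader-election procedure: $s$'s RequestVote response reports its entry at index $i$ with some term $t^\star \ge t$, and the candidate adopts, for each index beyond its commit index, the entry with the greatest reported term. By the inductive hypothesis every reported entry at index $i$ with term in $[t, t')$ also equals $op$, so the adopted value is $op$. There is an edge case when the candidate's own commit index already meets or exceeds $i$; here I would argue inductively that a follower's commit index advances past $i$ only in response to an AppendEntries from a leader of some term $\ge t$, and by the hypothesis such a leader already had $op$ at $i$ and sent it before advancing the commit index, so the candidate itself holds $op$ at $i$. For Raft, I would argue by contradiction: if the leader of term $t'$ lacked $op$ at $i$, then its log would be strictly less up-to-date than $s$'s (whose log extends through index $i$ with term $\ge t$), either because its last-log term is smaller, or because it is shorter at an equal last-log term; the voting predicate would then have forced $s$ to refuse the vote, contradicting $s \in V$. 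Raft's extra rule that a new leader cannot increase its commit index until it commits an entry in its own term is essential here, because it is what allows the up-to-dateness test at the last entry to stand in for a direct comparison at index $i$.

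The main obstacle is the Paxos step, specifically showing that $s$'s entry at index $i$ is not silently replaced by a non-$op$ value during the window between the commit in term $t$ and the vote in term $t'$, and simultaneously that a candidate whose commit index already covers $i$ must itself hold $op$ there. Both rely on carefully attributing every change to $s$'s log or commit index to some intermediate leader and invoking the inductive hypothesis for that leader's term. The Raft step is conceptually simpler once the up-to-dateness comparison is set up correctly, but it hinges on the subtle commit rule mentioned above, which is easy to overlook.
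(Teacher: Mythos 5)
Your proposal follows essentially the same route as the paper's proof sketch: strong induction on the later leader's term, quorum intersection to find a voter that acknowledged the committed entry, the observation that intermediate leaders (by the inductive hypothesis) cannot have caused that voter to overwrite $op$, and for Paxos the fact that the candidate adopts the highest-term reported entry, which must be $op$. You go somewhat further than the paper --- which only sketches the Paxos case and defers Raft --- by also treating the commit-index edge case and the Raft election rule; the only caveat is that your Raft step leans on the claim that a leader lacking $op$ at $i$ must be less up-to-date than the voter, which actually requires the log-matching property together with the inductive hypothesis (not only the current-term commit rule) to close.
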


\begin{proof}[Proof sketch for Paxos]

Assume operation $op$ is committed at index $i$ with term $t$.
We will use proof by induction over the terms greater than $t$.

\emph{\textsc{Base case:} If there is a leader of term $t+1$, then it will have the operation $op$ at index $i$.}

As any two majority quorums intersect and messages are ordered by term, then at least one server with operation $op$ at index $i$ and with term $t$ must have positively replied to the RequestVote RPC from the leader of $t+1$.
This server cannot have deleted or overwritten this operation as it cannot have positively responded to an AppendEntries RPC from any other leader since the leader of term $t$.
The leader will choose the operation $op$ as it will not receive any log entries with a term $>t$.

\emph{\textsc{Inductive case:} Assume that any leaders of terms $t+1$ to $t+k$ have the operation $op$ at index $i$. If there is a leader of term $t+k+1$ then it will also have operation $op$ at index $i$.}

As any two majority quorums intersect and messages are ordered by term, then at least one server with operation $op$ at index $i$ and with term $t$ to $t+k$ must have positively replied to the RequestVote RPC from the leader of term $t+k+1$.
This is because the server cannot have deleted or overwritten this operation as it has not positively responded to an AppendEntries RPC from any leader except those with terms $t$ to $t+k$.
From our induction hypothesis, all these leaders will also have operation $op$ at index $i$ and thus will not have overwritten it.
The leader may choose another operation only if it receives a log entry with that different operation at index $i$ and with a greater term.
From our induction hypothesis, all leaders of terms $t$ to $t+k$ will also have operation $op$ at index $i$ and thus will not write another operation at index $i$.
\end{proof}

The proof for Raft uses the same induction but the details differ due to Raft's different approach to leader election.

\section{Discussion}

\begin{table*}[t]
\centering
\begin{tabular}{p{3.5cm} p{6.5cm} p{6.5cm} }
 \hline
  & \textbf{Paxos} & \textbf{Raft} \\
 \hline
 \textbf{How does it ensure that each term has at most one leader?} &
 A server $s$ can only be a candidate in a term $t$ if $t\;\mathrm{mod}\; n = s$.
 There will only be one candidate per term so only one leader per term. &
 A follower can become a candidate in any term.
 Each follower will only vote for one candidate per term, so only one candidate can get a majority of votes and become the leader.\\
 \hline
 \textbf{How does it ensure that a new leader's log contains all committed log entries?} &
 Each RequestVote reply includes the follower's log entries.
 Once a candidate has received RequestVote responses from a majority of followers, it adds the entries with the highest term to its log.&
 A vote is granted only if the candidate's log is at least as up-to-date as the followers'. This ensures that a candidate only becomes a leader if its log is at least as up-to-date as a majority of followers. \\
 \hline
 \textbf{How does it ensure that leaders safely commit log entries from previous terms?} &
 Log entries from previous terms are added to the leader's log with the leader's term.
 The leader then replicates the log entries as if they were from the leader's term. &
 The leader replicates the log entries to the other servers without changing the term. The leader cannot consider these entries committed until it has replicated a subsequent log entry from its own term.
 \\
 \hline
\end{tabular}
\caption{\label{table:differences}Summary of the differences between Paxos and Raft}
\end{table*}

Raft and Paxos take different approaches to leader election,  summarised in Table~\ref{table:differences}.
We compare two dimensions, understandability and efficiency, to determine which is best.

\paragraph{Understandability.}
Raft guarantees that if two logs contain the same operation then it will have the same index and term in both.
In other words, each operation is assigned a unique index and term pair.
However, this is not the case in Paxos, where an operation may be assigned a higher term by a future leader, as demonstrated by operations B and C in Figure~\ref{fig:log/after-s1}.
In Paxos, a log entry before the commit index may be overwritten.
This is safe because the log entry will only be overwritten by an entry with the same operation, but it not as intuitive as Raft's approach.

The flip side of this is that Paxos makes it safe to commit a log entry if it is present on a majority of servers; but
this is not the case for Raft, which requires that a leader only commits a log entry from a previous term if it is present on the majority of servers and the leader has committed a subsequent log entry from the current term.

In Paxos, the log entries replicated by the leader are either from the current term or they are already committed.
We can see this in Figure~\ref{fig:logs}, where all log entries after the commit index on the leader have the current term.
This is not the case in Raft where a leader may be replicating uncommitted entries from previous terms.

Overall, we feel that Raft's approach is slightly more understandable than Paxos' but not significantly so.

\paragraph{Efficiency.}
In Paxos, if multiple servers become candidates simultaneously, the candidate with the higher term will win the election.
In Raft, if multiple servers become candidates simultaneously, they may split the votes as they will have the same term, and so neither will win the election.
Raft mitigates this by having followers wait an additional period, drawn  from a uniform random distribution, after the election timeout.
We thus expect that Raft will be both slower and have higher variance in the time taken to elect a leader.

However, Raft's leader election phase is more lightweight than Paxos'.
Raft only allows a candidate with an up-to-date log to become a leader and thus need not send log entries during leader election.
This is not true of Paxos, where every positive RequestVote response includes the follower's log entries after the candidate's commit index.
There are various options to reduce the number of log entries sent but ultimately, it will always be necessary for some log entries to be sent if the leader's log is not already up to date.

It is not just with the RequestVote responses that Paxos sends more log entries than Raft.
In both algorithms, once a candidate becomes a leader it will copy its log to all other servers.
In Paxos, a log entry may have been given a new term by the leader and thus the leader may send another copy of the log entry to a server which already has a copy.
This is not the case with Raft, where each log entry keeps the same term throughout its time in the log.

Overall, compared to Paxos, Raft's approach to leader election is surprisingly efficient for such a simple approach.

\section{Relation to classical Paxos}
\label{s:paxos_comparision}

Readers who are familiar with Paxos may feel that our description of Paxos differs from those previously published, and so we now outline how our Paxos algorithm relates to those found elsewhere in the literature.

\paragraph{Roles}
Some descriptions of Paxos divide the responsibility of Paxos into three roles: proposer, acceptor and learner~\cite{Lamport:SIGACT01} or leader, acceptor and replica~\cite{Renesse:CS15}.
Our presentation uses just one role, server, which incorporates all roles. This  presentation using a single role has  also used the name \emph{replica}~\cite{Chandra:SOSP07},

\paragraph{Terminology}
Terms are also referred to as views, ballot numbers~\cite{Renesse:CS15}, proposal numbers~\cite{Lamport:SIGACT01}, round numbers or sequence numbers~\cite{Chandra:SOSP07}.
Our leader is also referred to as a master~\cite{Chandra:SOSP07}, primary, coordinator or distinguished proposer~\cite{Lamport:SIGACT01}.
Typically, the period during which a server is a candidate is known as phase-1 and the period during which a server is a leader is known as phase-2.
The RequestVote RPCs are often referred to as phase1a and phase1b messages~\cite{Renesse:CS15}, prepare request and response~\cite{Lamport:SIGACT01} or prepare and promise messages.
The AppendEntries RPCs are often referred to as phase2a and phase2b messages~\cite{Renesse:CS15}, accept request and response~\cite{Lamport:SIGACT01} or propose and accept messages.

\paragraph{Terms}
Paxos requires only that terms are totally ordered and that each server is allocated a disjoint set of terms (for safety) and that each server can use a term greater than any other term (for liveness).
Whilst some descriptions of Paxos use round-robin natural numbers like  us~\cite{Chandra:SOSP07}, others use lexicographically ordered pairs, consisting of an integer and the server ID, where each server only uses terms containing its own ID~\cite{Renesse:CS15}.


\paragraph{Ordering}
Our log entries are replicated and decided in-order. This is not necessary but it does avoid the complexities of filling log gaps~\cite{Lamport:SIGACT01}.
Similarly, some descriptions of Paxos bound the number of concurrent decisions,  often necessary for reconfiguration~\cite{Lamport:SIGACT01,Renesse:CS15}.

\section{Summary}
\label{s:summary}

The Raft algorithm was proposed to address the longstanding issues with understandability of the widely studied Paxos algorithm. In this paper, we have demonstrated that much of the understandability of Raft comes from its pragmatic abstraction and excellent presentation.
By describing a simplified Paxos algorithm using the same approach as Raft, we find that the two algorithms differ only in their approach to leader election.
Specifically:
\begin{enumerate}[label=(\roman*)]
    \item Paxos divides terms between servers, whereas Raft allows a follower to become a candidate in any term but followers will vote for only one candidate per term.
    \item Paxos followers will vote for any candidate, whereas Raft followers will only vote for a candidate if the candidate's log is at-least-as up-to-date.
    \item If a leader has uncommitted log entries from a previous term, Paxos will replicate them in the current term whereas Raft will replicate them in their original term.
\end{enumerate}

The Raft paper claims that Raft is significantly more understandable than Paxos, and  as efficient.
On the contrary, we find that the two algorithms are not significantly different in understandability but Raft's leader election is surprisingly lightweight when compared to Paxos'.
Both algorithms we have presented are  na\"{\i}ve by design and could certainly be optimised to improve performance, though often at the cost of increased complexity.

\paragraph{Acknowledgements.}
This work funded in part by EPSRC EP/N028260/2 and EP/M02315X/1.

{
  \bibliographystyle{acm}
  \bibliography{refs}
  \balance{}
  \clearpage
}

\appendix

\section{Paxos Algorithm}
\label{appendix:paxos}

This summarises our simplified, Raft-style Paxos algorithm.
The text in red is unique to Paxos.
\begin{tightbox}{State}
\textbf{\underline{Persistent state on all servers:}}
(Updated on stable storage before responding to RPCs)
\begin{description}
    \item[currentTerm] latest term server has seen (initialized to 0 on first boot, increases monotonically)
    \item[log[]] log entries; each entry contains command for state machine, and term when entry was received by leader (first index is 1)
\end{description}

\textbf{\underline{Volatile state on all servers:}}
\begin{description}
    \item[commitIndex] index of highest log entry known to be committed (initialized to 0, increases monotonically)
    \item[lastApplied] index of highest log entry applied to state machine (initialized to 0, increases monotonically)
\end{description}

\textcolor{Red}{\textbf{\underline{Volatile state on candidates:}}}
\textcolor{Red}{(Reinitialized after election)}
\begin{description}
    \item[\textcolor{Red}{entries[]}] \textcolor{Red}{Log entries received with votes}
\end{description}

\textbf{\underline{Volatile state on leaders:}}
(Reinitialized after election)
\begin{description}
    \item[nextIndex[]] for each server, index of the next log entry to send to that server \textcolor{Red}{(initialized to leader commit index + 1)}
    \item[matchIndex[]] for each server, index of highest log entry known to be replicated on server (initialized to 0, increases monotonically)
\end{description}
\end{tightbox}

\begin{tightbox}{AppendEntries RPC}
Invoked by leader to replicate log entries; also used as heartbeat

\textbf{\underline{Arguments:}}
\begin{description}
    \item[term] leader's term
    \item[prevLogIndex] index of log entry immediately preceding new ones
    \item[prevLogTerm] term of prevLogIndex entry
    \item[entries[]] log entries to store (empty for heartbeat; may send more than one for efficiency)
    \item[leaderCommit] leader's commitIndex 
\end{description}
\textbf{\underline{Results:}}
\begin{description}
    \item[term] currentTerm, for leader to update itself
    \item[success] true if follower contained entry matching prevLogIndex and prevLogTerm
\end{description}
\textbf{\underline{Receiver implementation:}}
\begin{enumerate}
    \item Reply false if term < currentTerm
    \item Reply false if log doesn't contain an entry at prevLogIndex
whose term matches prevLogTerm
    \item If an existing entry conflicts with a new one (same index
but different terms), delete the existing entry and all that
follow it 
    \item Append any new entries not already in the log
    \item If leaderCommit > commitIndex: set commitIndex = \\min(leaderCommit, index of last new entry)
\end{enumerate}
\end{tightbox}

\begin{tightbox}{RequestVote RPC}
Invoked by candidates to gather votes

\textbf{\underline{Arguments:}}
\begin{description}
    \item[term] candidate's term
    \item[\textcolor{Red}{leaderCommit}] \textcolor{Red}{candidate's commit index}
\end{description}
\textbf{\underline{Results:}}
\begin{description}
    \item[term] currentTerm, for candidate to update itself
    \item[voteGranted] true indicates candidate received vote
    \item[\textcolor{Red}{entries[]}] \textcolor{Red}{follower's log entries after leaderCommit}
\end{description}
\textbf{\underline{Receiver implementation:}}
\begin{enumerate}
    \item Reply false if term < currentTerm
    \item \textcolor{Red}{Grant vote and send any log entries after leaderCommit}
\end{enumerate}
\end{tightbox}

\begin{tightbox}{Rules for Servers}
\textbf{\underline{All Servers:}}
\begin{itemize}
    \item If commitIndex > lastApplied: increment lastApplied and apply log[lastApplied] to state machine
    \item If RPC request or response contains term T > currentTerm: set currentTerm = T and convert to follower
\end{itemize}
\textbf{\underline{Followers:}}
\begin{itemize}
    \item Respond to RPCs from candidates and leaders
    \item If election timeout elapses without receiving AppendEntries
RPC from current leader or granting vote to candidate: convert to candidate
\end{itemize}
\textbf{\underline{Candidates:}}
\begin{itemize}
    \item  On conversion to candidate, start election: \textcolor{Red}{increase currentTerm to next $t$ such that $t\;\mathrm{mod}\; n = s$, copy any log entries after commitIndex to entries[]}, and send RequestVote RPCs to all other servers
    \item \textcolor{Red}{Add any log entries received from RequestVote responses to entries[]}
    \item If votes received from majority of servers: \textcolor{Red}{update log by adding entries[] with currentTerm (using value with greatest term if there are multiple entries with same index) and} become leader
\end{itemize}
\textbf{\underline{Leaders:}}
\begin{itemize}
    \item Upon election: send initial empty AppendEntries RPCs (heartbeat) to each server; repeat during idle periods to prevent election timeouts
    \item If command received from client: append entry to local log, respond after entry applied to state machine
    \item If last log index $\geq$ nextIndex for a follower: send AppendEntries RPC with log entries starting at nextIndex
    \begin{itemize}
        \item If successful: update nextIndex and matchIndex for follower
        \item If AppendEntries fails because of log inconsistency: decrement nextIndex and retry
    \end{itemize}
    \item If there exists an N such that N > commitIndex and a majority of matchIndex[i] $\geq$ N: set commitIndex = N
\end{itemize}
\end{tightbox}

\balance{}
\newpage

\section{Raft Algorithm}
\label{appendix:raft}

This is a reproduction of Figure~2 from the Raft paper~\cite{Ongaro:ATC14}.
The text in red is unique to Raft.
\begin{tightbox}{State}
\textbf{\underline{Persistent state on all servers:}}
(Updated on stable storage before responding to RPCs)
\begin{description}
    \item[currentTerm] latest term server has seen (initialized to 0 on first boot, increases monotonically)
    \item[\textcolor{Red}{votedFor}] \textcolor{Red}{candidateId that received vote in current term (or null if none)}
    \item[log[]] log entries; each entry contains command for state machine, and term when entry was received by leader (first index is 1)
\end{description}

\textbf{\underline{Volatile state on all servers:}}
\begin{description}
    \item[commitIndex] index of highest log entry known to be committed (initialized to 0, increases monotonically)
    \item[lastApplied] index of highest log entry applied to state machine (initialized to 0, increases monotonically)
\end{description}

\textbf{\underline{Volatile state on leaders:}}
(Reinitialized after election)
\begin{description}
    \item[nextIndex[]] for each server, index of the next log entry to send to that server \textcolor{Red}{(initialized to leader last log index + 1)}
    \item[matchIndex[]] for each server, index of highest log entry known to be replicated on server (initialized to 0, increases monotonically)
\end{description}
\end{tightbox}
\begin{tightbox}{AppendEntries RPC}
Invoked by leader to replicate log entries; also used as heartbeat

\textbf{\underline{Arguments:}}
\begin{description}
    \item[term] leader's term
    \item[prevLogIndex] index of log entry immediately preceding new ones
    \item[prevLogTerm] term of prevLogIndex entry
    \item[entries[]] log entries to store (empty for heartbeat; may send more than one for efficiency)
    \item[leaderCommit] leader's commitIndex 
\end{description}
\textbf{\underline{Results:}}
\begin{description}
    \item[term] currentTerm, for leader to update itself
    \item[success] true if follower contained entry matching prevLogIndex and prevLogTerm
\end{description}
\textbf{\underline{Receiver implementation:}}
\begin{enumerate}
    \item Reply false if term < currentTerm
    \item Reply false if log doesn't contain an entry at prevLogIndex
whose term matches prevLogTerm
    \item If an existing entry conflicts with a new one (same index
but different terms), delete the existing entry and all that
follow it 
    \item Append any new entries not already in the log
    \item If leaderCommit > commitIndex: set commitIndex = \\min(leaderCommit, index of last new entry)
\end{enumerate}
\end{tightbox}

\begin{tightbox}{RequestVote RPC}
Invoked by candidates to gather votes

\textbf{\underline{Arguments:}}
\begin{description}
    \item[term] candidate's term
    \item[\textcolor{Red}{candidateId}] \textcolor{Red}{candidate requesting vote}
    \item[\textcolor{Red}{lastLogIndex}] \textcolor{Red}{index of candidate's last log entry}
    \item[\textcolor{Red}{lastLogTerm}] \textcolor{Red}{term of candidate's last log entry}
\end{description}
\textbf{\underline{Results:}}
\begin{description}
    \item[term] currentTerm, for candidate to update itself
    \item[voteGranted] true indicates candidate received vote
\end{description}
\textbf{\underline{Receiver implementation:}}
\begin{enumerate}
    \item Reply false if term < currentTerm
    \item \textcolor{Red}{If votedFor is null or candidateId, and candidate's log is at least as up-to-date as receiver's log: grant vote}
\end{enumerate}
\end{tightbox}

\begin{tightbox}{Rules for Servers}
\textbf{\underline{All Servers:}}
\begin{itemize}
    \item If commitIndex > lastApplied: increment lastApplied, apply log[lastApplied] to state machine
    \item If RPC request or response contains term T > currentTerm: set currentTerm = T and convert to follower
\end{itemize}
\textbf{\underline{Followers:}}
\begin{itemize}
    \item Respond to RPCs from candidates and leaders
    \item If election timeout elapses without receiving AppendEntries
RPC from current leader or granting vote to candidate: convert to candidate
\end{itemize}
\textbf{\underline{Candidates:}}
\begin{itemize}
    \item  On conversion to candidate, start election: \textcolor{Red}{increment currentTerm, vote for self, reset election timer} and send RequestVote RPCs to all other servers
    \item If votes received from majority of servers: become leader
    \item \textcolor{Red}{If AppendEntries RPC received from new leader: convert to follower}
    \item \textcolor{Red}{If election timeout elapses: start new election}
\end{itemize}
\textbf{\underline{Leaders:}}
\begin{itemize}
    \item Upon election: send initial empty AppendEntries RPCs (heartbeat) to each server; repeat during idle periods to prevent election timeouts
    \item If command received from client: append entry to local log, respond after entry applied to state machine
    \item If last log index $\geq$ nextIndex for a follower: send AppendEntries RPC with log entries starting at nextIndex
    \begin{itemize}
        \item If successful: update nextIndex and matchIndex for follower
        \item If AppendEntries fails because of log inconsistency: decrement nextIndex and retry
    \end{itemize}
    \item If there exists an N such that N > commitIndex and a majority of matchIndex[i] $\geq$ N, \textcolor{Red}{and log[N].term $==$ currentTerm}: set commitIndex = N
\end{itemize}
\end{tightbox}

\balance{}
\end{document}